\begin{document}

\newcommand{\ket}[1]{|{#1}\rangle}
\newcommand{\gateset}{S}
\newtheorem{Proposition}{Proposition}

\title{Solovay-Kitaev Decomposition Strategy for Single-Qubit Channels}

\author{Dong-Sheng Wang}
\affiliation{Institute for Quantum Science and Technology, University of Calgary, Alberta T2N 1N4, Canada}
\author{Dominic W. Berry}
\affiliation{Department of Physics and Astronomy, Macquarie University, Sydney, New South Wales 2109, Australia}
\author{Marcos C. de Oliveira}
\affiliation{Institute for Quantum Science and Technology, University of Calgary, Alberta T2N 1N4, Canada}
\affiliation{Instituto de F\'{\i}sica Gleb Wataghin, Universidade Estadual de Campinas, 13083-859,
	Campinas, S\~{a}o Paulo, Brazil}
\author{Barry C. Sanders}
\affiliation{Institute for Quantum Science and Technology, University of Calgary, Alberta T2N 1N4, Canada}

\begin{abstract}
Inspired by the Solovay-Kitaev decomposition for approximating
unitary operations as a sequence of operations selected from a
universal quantum computing gate set, we introduce a method for approximating any single-qubit channel
using single-qubit gates and the controlled-\textsc{not} (\textsc{cnot}).
Our approach uses the decomposition of the single-qubit channel into a convex combination of ``quasiextreme'' channels.
Previous techniques for simulating general single-qubit channels would require as many as 20 \textsc{cnot} gates,
whereas ours only needs one, bringing it within the range of current experiments.

\end{abstract}

\pacs{03.67.Ac, 03.65.Yz,  89.70.Eg}
\maketitle

Quantum computing requires the capability to efficiently approximate arbitrary quantum operations
as a sequence of a finite set of operations.
The celebrated Solovay-Kitaev theorem~\cite{NC00,KSV02} addresses this problem
by providing a strategy for approximating any unitary operation~$U$ within error tolerance~$\epsilon$
as a sequence of $O({\rm polylog}(1/\epsilon))$ gates chosen from the finite set.
Dawson and Nielsen~\cite{DN06} introduced an algorithm for the Solovay-Kitaev decomposition,
and many improvements have appeared recently~\cite{KMM13a,KMM13b,TvMH13,DCS12,PS12,GS13,VK13}.
These algorithms are central to quantum simulation efforts,
which is especially important because quantum simulation is regarded as the most promising
direction for a nontrivial quantum computation~\cite{AL97}.

Closed-system (i.e., Hamiltonian-generated) quantum simulation is well established~\cite{Llo96,AT03,BACS07a,Chi09,WBHS11},
but open-system quantum simulation is still at an early stage with attention focused on
simulating memoryless (Markovian) dynamics based on a Lindblad master equation~\cite{BCC+01,LV01,WML+10,WAN11,KBG+11,BK12}.
Open-system quantum simulation is important to cool to the ground state~\cite{ADLH+05},
prepare thermal states~\cite{TD00,PW09}
and entangled states~\cite{BMS+11,MDPZ12}, and
study nonequilibrium quantum phase transitions~\cite{PI11}.
Conversely, dissipative dynamics can be a resource for universal quantum computing \cite{VWC09}.

Given the importance of open-system quantum simulation,
efficiently approximating channels rather than just approximating unitary evolution is critical.
Here, we solve single-qubit channel simulation,
developing methods that could ultimately be adapted for multiqubit channels.
Our channel simulator could be regarded as a primitive for simulating open-system dynamics,
in the same way as single-qubit unitary gates are a primitive for closed-system dynamics.

An obvious direction for implementing a channel is applying Stinespring dilation
to implement a channel as a unitary operator on an expanded Hilbert space.
This resultant unitary transformation can then be implemented by standard techniques~\cite{KMM13a,KMM13b,TvMH13,DCS12,PS12,GS13,VK13}.
The problem with this approach is that it requires implementing a general unitary operator on a space with dimension given by the cube of the Hilbert space dimension for the original system.
In the case of a single-qubit channel, a unitary operation on three qubits would be required.
The best known technique to implement
a general unitary on three qubits
requires a complicated circuit with 20 \textsc{cnot} gates~\cite{MV06}.
An alternative technique~\cite{LV01} uses one ancilla qubit, but
uses a sequence of a large number of interactions, which would require a large number of \textsc{cnot}s.

It is possible to achieve channels far more easily in special cases, or probabilistically,
and to date experimental realizations have had these limitations \cite{Alm07,Qing07,Hanne09,Lee11,FPK+12}.
In particular, a \emph{unital} qubit channel, such as the phase damping channel, can be achieved relatively easily
by applying a random unitary operation.
Alternatively, if one is willing to accept a significant probability of failure, then it is straightforward to provide a method to
generate arbitrary channels~\cite{Piani11}.
In contrast, our technique for qubit channels is general, deterministic, and only requires one \textsc{cnot} and ancilla together with local operations.
As it is already possible to demonstrate a single \textsc{cnot} in several physical systems~\cite{Ladd10},
our technique is implementable with current technology.

We quantify the error tolerance~$\epsilon$
by the Schatten one-norm distance between the simulated channel and the correct channel~\cite{Wat05,KBG+11}.
The classical and quantum algorithms we derive for single-qubit channel simulation
are efficient in that their time and space costs are no worse than polylog$(1/\epsilon)$.
Our algorithms and complexity results for channel simulation rely on decomposing the channel into a
convex combination of simpler channels, dilating each of these channels to unitary mappings
on two qubits~\cite{LV01}, and making use of the Solovay-Kitaev Dawson-Nielsen (SKDN) algorithm~\cite{DN06}.

A succinct statement of the problem we solve follows.
\newtheorem*{theorem}{Problem}
\begin{theorem}
	Construct an efficient autonomous algorithm for designing an efficient quantum circuit,
	implemented from a small single-qubit universal gate set,
	that accurately simulates any completely positive trace-preserving single-qubit mapping
	for any input state within prespecified error tolerance~$\epsilon$
	quantifying the distance between true and approximated states.
\end{theorem}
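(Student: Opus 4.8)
The plan is to reduce the general single-qubit channel-simulation task to three composable subproblems: a structural normalization of the channel, a dilation of simple building blocks, and a unitary-synthesis step handled by the existing SKDN algorithm. First I would invoke the canonical form for qubit channels: any completely positive trace-preserving map on a single qubit can be written as $\mathcal{E} = \mathcal{U}_2 \circ \mathcal{E}_{\mathrm{c}} \circ \mathcal{U}_1$, where $\mathcal{U}_1$ and $\mathcal{U}_2$ are unitary conjugations and $\mathcal{E}_{\mathrm{c}}$ acts as a diagonal affine contraction on the Bloch vector (obtained from a singular-value decomposition of the associated real affine map). Since the unitary parts are single-qubit operations that the SKDN routine already synthesizes within error $\epsilon$ from $\gateset$, the problem collapses to simulating the canonical core $\mathcal{E}_{\mathrm{c}}$.

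Next I would express $\mathcal{E}_{\mathrm{c}}$ as a convex combination $\sum_i p_i \mathcal{E}_i$ of quasiextreme channels. The convex structure of the set of qubit channels is well understood: its extreme points are either unitary conjugations or maps built from exactly two Kraus operators of a restricted form, and the quasiextreme channels are precisely those two-Kraus-operator maps that admit an economical dilation. The key structural claim I need is that each such quasiextreme channel dilates to a two-qubit unitary on the system plus a single ancilla following~\cite{LV01}, and -- crucially -- that this unitary requires only one \textsc{cnot} together with local gates. I would establish this by writing the Stinespring isometry explicitly from the two Kraus operators and showing that its entangling content is captured by a single controlled operation, with the remainder being local rotations conditioned on the ancilla.

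Finally, the convex combination itself I would realize by classical randomization: sampling the index $i$ with probability $p_i$ and running the corresponding one-\textsc{cnot} circuit reproduces the mixed channel exactly in expectation, so that any individual run uses only a single entangling gate. Each two-qubit dilation unitary is then passed to the SKDN algorithm to be approximated within $\epsilon$ using gates from $\gateset$, and the total error follows by convexity, $\|\sum_i p_i \mathcal{E}_i - \sum_i p_i \tilde{\mathcal{E}}_i\|_1 \le \sum_i p_i \|\mathcal{E}_i - \tilde{\mathcal{E}}_i\|_1 \le \epsilon$, measured in the Schatten one-norm. The decomposition is computed from the channel parameters in cost independent of $\epsilon$, while the synthesis inherits the SKDN polylog$(1/\epsilon)$ scaling, so the algorithm is efficient and autonomous as required. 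The hardest part will be the dilation step: proving that a single \textsc{cnot} suffices for \emph{every} quasiextreme channel is exactly what yields the dramatic reduction from the twenty \textsc{cnot} gates of the naive approach to one, and it must be shown to hold uniformly across the entire quasiextreme family.
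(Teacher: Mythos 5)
Your architecture matches the paper's almost exactly: convex decomposition into quasiextreme channels, a one-\textsc{cnot} dilation of each quasiextreme piece on a single ancilla, classical randomization to realize the mixture, SKDN for the continuous single-qubit unitaries, and convexity to combine the errors. The one structural difference---pulling out the diagonalizing unitaries of the whole channel first and then decomposing the canonical core, rather than decomposing $\mathcal{E}$ first and diagonalizing each quasiextreme branch separately as the paper does---is harmless, since unitary pre- and post-composition maps quasiextreme channels to quasiextreme channels; just be aware that the two branches generally need \emph{different} diagonalizing unitaries, so you cannot assume the quasiextreme pieces of the core are themselves already diagonal.

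There are, however, two genuine gaps. First, your error chain ends with $\|\mathcal{E}_i-\tilde{\mathcal{E}}_i\|_1\le\epsilon$ asserted directly from "SKDN approximates within $\epsilon$," but SKDN's guarantee is an operator-norm bound on unitaries, $\|U-\tilde{U}\|$, not a bound on the induced trace-norm distance of the resulting channels (which include the ancilla preparation, measurement, and classically controlled correction). You need the lemma that for minimal dilations $\|\mathcal{E}-\tilde{\mathcal{E}}\|_{1\rightarrow 1}\le 2\|U-\tilde{U}\|$ (the paper's Proposition~2, via Eq.~(17) of Ref.~\cite{HLSW04}), and then a budget splitting the allowed $\epsilon/2$ unitary error among the four single-qubit unitaries in the circuit, i.e.\ $\epsilon/8$ each by subadditivity of the operator norm under products of unitaries. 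Without this factor-of-two conversion and the per-gate budget, the claimed $\epsilon$ bound does not follow. Second, you say each "two-qubit dilation unitary is then passed to the SKDN algorithm," but SKDN as used here synthesizes \emph{single-qubit} unitaries from $\gateset$; the \textsc{cnot} must be kept as an exact primitive outside the synthesis, otherwise the one-\textsc{cnot} count is lost. Finally, the single-\textsc{cnot} dilation itself is the crux and you only promise to verify it; the explicit construction is an ancilla rotation $R_y(2\gamma_1)$, a \textsc{cnot} from system to ancilla, a rotation $R_y(2\gamma_2)$, a computational-basis measurement of the ancilla, and a classically controlled $X$ on the system (it is precisely the replacement of a second \textsc{cnot} by this classically controlled correction that gets the count down to one).
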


\noindent
Our solution has the following components:
(i)~the decomposition of arbitrary single-qubit channels as a convex combination of quasiextreme single-qubit channels~\cite{RSW02},
(ii)~a cost reduction of single-qubit channel simulation from requiring a unitary operation on three qubits
	to a circuit with one ancillary qubit and one \textsc{cnot},
(iii)~a geometric lookup database for implementing
	the SKDN algorithm~\cite{DN06}
	to decompose unitary operators, and
(iv)~a proof of efficient simulation by showing that the costs for both the
	classical algorithm for designing the circuit and the quantum circuit itself are at most polylog$(1/\epsilon)$.

Now, we proceed to the technical aspects.
The system is a single qubit
whose state is a positive semidefinite operator $\rho\in\mathcal{T}\left(\mathscr{H}^\text{S}\right)$
with~$\mathscr{H}^\text{S}$ the two-dimensional Hilbert space for the system
and $\mathcal{T}(\mathscr{H})$ denoting
the set of operators on Hilbert space $\mathscr{H}$.
The channel is
\begin{equation}
\label{eq:channel}
	\mathcal{E}:\mathcal{T}(\mathscr{H}^\text{S})\rightarrow\mathcal{T}(\mathscr{H}^\text{S}):
	\rho\mapsto\sum_i K_i \rho K_i^{\dagger},
\end{equation}
with the summation at the end showing the operator-sum representation~\cite{Bha03,Kra83}.
The operators~$\{K_i\}$ are called Kraus operators and satisfy $\sum_iK_i^{\dagger}K_i=\openone$.

The channel can be dilated to a unitary operator on the joint Hilbert space
$\mathscr{H}^\text{SE}=\mathscr{H}^\text{S}\otimes\mathscr{H}^\text{E}$
with~$E$ denoting the environment (or ancillary space) being introduced to purify the dynamics.
Conversion of channel~$\mathcal{E}$ to a Hamiltonian-generated unitary evolution can be achieved
by performing a Stinespring dilation with unitary operator
$U:\mathscr{H}^\text{SE}\rightarrow\mathscr{H}^\text{SE}$, and
\begin{equation}
	\mathcal{U}:
		\mathcal{T}(\mathscr{H}^\text{SE})\rightarrow\mathcal{T}(\mathscr{H}^\text{SE}):
		\rho^\text{SE}\mapsto\rho^{\prime\text{SE}}=U\rho^\text{SE}U^{\dagger}
\end{equation}
such that tr$_\text{E}\rho^\text{SE}=\rho^\text{S}$,
tr$_\text{E}\rho^{\prime\text{SE}}=\rho^{\prime\text{S}}$
and $\mathcal{E}:\rho^\text{S}\mapsto\rho^{\prime\text{S}}$.
Specifically the Kraus operators~(\ref{eq:channel}) have representation
$K_i= {}^\text{E}\!\langle i|U\ket{0}^\text{E}$
for $\ket{i}^\text{E}$ (including $\ket{0}^\text{E}$)
an orthonormal basis state of the environment~\cite{Kra83}.

The unitary operator~$U$ is a minimal dilation of $\mathcal{E}$ if~$U$ is a dilation such that
$\text{dim}\mathscr{H^\text{E}}=\left(\text{dim}\mathscr{H^\text{S}}\right)^2$.
For the case of a single qubit, $\text{dim}\mathscr{H}^\text{E}=4$ for minimal dilation.
Although $\mathscr{H}^\text{E}$ should have a dimension that is the square of the dimension of~$\mathscr{H}^\text{S}$,
and hence four dimensional, we will show that we only require a single resettable ancillary qubit,
so ${\rm dim}\mathscr{H}^\text{E}=2$.

We develop the algorithm for a general single-qubit completely positive trace-preserving (CPTP) map
using the geometrical state representation
$\rho=\frac{1}{2}[\openone+\bm{b}\cdot\bm{\sigma}]$,
where $\bm{b}$ is a three-dimensional vector and
$\bm{\sigma}:=(X,Y,Z)$.
The CPTP map can then be represented by a $4 \times 4$ matrix~\cite{RSW02,WC08}
\begin{equation}
\label{eq:T}
	\mathcal{E}\to\mathbb{T}
		=\begin{pmatrix}
			1&\bm{0}\\ \bm{t}&\bm{T}\\
		\end{pmatrix},\;
		\mathbb{T}_{ij}
			=\frac{1}{2}\text{tr}\left[\sigma_i\mathcal{E}(\sigma_j)\right],\;
		\sigma_0:=\openone,
\end{equation}
with~$\mathbb T$ having 12 independent parameters.
In this representation, the channel is an affine map ~\cite{KR01}
\begin{equation}
	\mathcal{E}:\rho\mapsto\frac{1}{2}(\openone+\bm{b}'\cdot\bm{\sigma}), \quad
	\bm{b}'=\bm{T}\bm{b}+\bm{t}.
\end{equation}
Geometrically, $\mathcal{E}$ maps the state ball into an ellipsoid,
with $\bm{t}$ the shift from the ball's origin
and $\bm{T}$ a distortion matrix for the ball.

In our approach, the channel is constructed from two simpler channels, each of which can be simulated using only one ancillary qubit.
Any single-qubit channel can be decomposed into a convex combination of two channels belonging
to the closure of the set of extreme points of the set of single-qubit channels~\cite{RSW02}.
It turns out that these quasiextreme channels, denoted as~$\mathcal{E}^\text{qe}$, can be simulated using only one ancillary qubit.
In addition, the convex combination is easy to implement, simply by probabilistically implementing one or the other of the quantum channels.

For any CPTP map, the distortion matrix can be transferred into a diagonal form via a singular-value decomposition, so
$\mathcal{E}=\mathcal{U}(\bm{\varphi})\mathcal{E}'\mathcal{U}(\bm{\delta})$
for some~$\mathcal{E}'$ with a diagonal $\bm{T'}$~\cite{KR01}.
In the case of the quasiextreme channel, the shift vector and distortion matrix are of the form~\cite{RSW02}
\begin{align}
\bm{t}'_\text{qe}&=(0,0,\sin\mu \sin\nu)^T, \\
\bm{T}'_\text{qe}&=\text{diag}\left(\cos\nu,\cos\mu,\cos\mu\cos\nu\right)
\end{align}
for some $\mu$ and $\nu$.
This map can be obtained via two Kraus operators
\begin{equation}
	K_0=\begin{pmatrix}\cos\beta &0\\0&\cos\alpha\end{pmatrix}, \quad
	K_1=\begin{pmatrix}0&\sin\alpha\\\sin\beta&0\end{pmatrix},
\end{equation}
where $\alpha=(\mu+\nu)/2$ and $\beta=(\mu-\nu)/2$.
The channel $\mathcal{E}^\text{qe}$ is a generalization of the amplitude damping channel.

\begin{figure}[!b]
\includegraphics[scale=0.2]{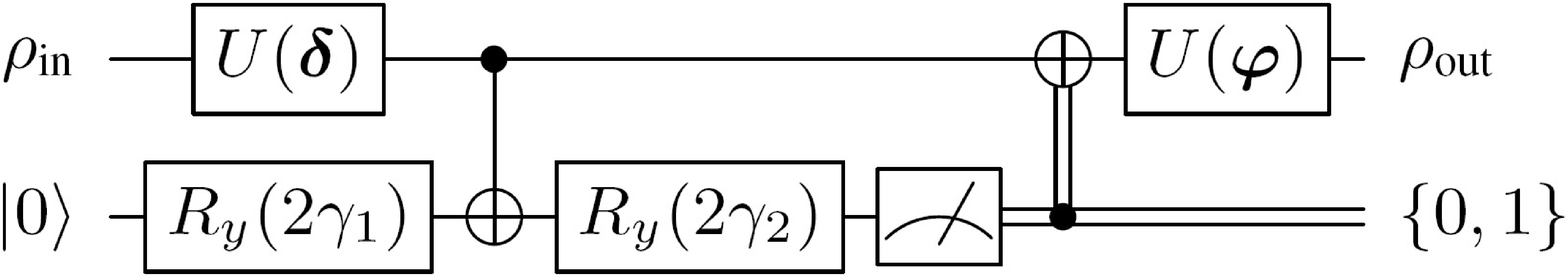}
\caption{The circuit to implement the quasiextreme channel $\mathcal{E}^\text{qe}$.
The unitary operators $U(\bm{\delta})$ and $U(\bm{\varphi})$ serve to diagonalize the channel.}
\label{fig2}
\end{figure}

The circuit to implement the channel $\mathcal{E}^\text{qe}$ is depicted in Fig.~\ref{fig2}.
The rotation takes the form $R_y(2\gamma)=\exp(-iY\gamma)=\openone\cos\gamma-i Y\sin\gamma$;
the two angles in the rotations are $2\gamma_1=\beta-\alpha+\pi/2$ and $2\gamma_2=\beta+\alpha-\pi/2$.
The measurement in the computational basis with the outcome $\ket{0}$ ($\ket{1}$) corresponds to the realization of the Kraus operator $K_0$ ($K_1$).
There is only one \textsc{cnot} required because the final operation is just a classically controlled $X$ operation.

To explain the action of this circuit, note first that the unitary operations $U(\bm{\delta})$ and $U(\bm{\varphi})$
are just the unitaries to diagonalize the distortion matrix.
If the system qubit were in the state $\ket{0}$, then the \textsc{cnot} would have no action on the ancilla, and the two rotations combine to give $R_y(2\beta)$,
which yields the state $\cos\beta\ket{0}+\sin\beta\ket{1}$.
If the system is in the state $\ket{1}$, then an $X$ operation flips the ancilla, and then the two rotations give
$\cos\alpha\ket{0}+\sin\alpha\ket{1}$.
Measuring the ancilla in the state $\ket{0}$ then multiplies state $\ket{0}$ for the system by $\cos\beta$
and state $\ket{1}$ by $\cos\alpha$; this is the action of $K_0$.
Similarly, measuring the ancilla in the state $\ket{1}$ multiplies state $\ket{0}$ for the system by $\sin\beta$,
and state $\ket{1}$ by $\sin\alpha$; this is the action of the operator
\begin{equation}
	K'_1=\begin{pmatrix}\sin\beta &0\\0&\sin\alpha\end{pmatrix}.
\end{equation}
In that case we can simply apply $X$, which gives the required Kraus operator $K_1$.

In contrast, the direct approach to simulate a single-qubit channel is to use Stinespring dilation to construct a unitary acting on the system qubit and two ancillary qubits.
This approach is somewhat inefficient, as a large number of gates is needed to implement a three-qubit unitary.
The best known technique to achieve a three-qubit unitary uses 20 \textsc{cnot}s~\cite{MV06}, although the proven lower bound is 14~\cite{SMB04}.
In contrast, our technique succeeds with only one such gate.
Our result is now summarized in Proposition~\ref{prop:E}.

\begin{Proposition}
\label{prop:E}
Any single-qubit CPTP channel~$\mathcal{E}$
can be simulated with one ancillary qubit, one \textsc{cnot} and four single-qubit operations.
\end{Proposition}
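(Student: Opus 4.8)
The plan is to assemble the ingredients developed above into a single circuit and then count its resources. First I would invoke the decomposition of Ruskai, Szarek and Werner~\cite{RSW02} to write the arbitrary channel as a convex combination $\mathcal{E}=p\,\mathcal{E}^\text{qe}_1+(1-p)\,\mathcal{E}^\text{qe}_2$ of two quasiextreme channels. Because a convex combination of channels is realized simply by sampling a classical bit with bias $p$ and applying the corresponding quasiextreme channel, the mixing itself consumes no quantum resources; the resource count therefore reduces to that of a single quasiextreme channel, and it suffices to exhibit a circuit for $\mathcal{E}^\text{qe}$ meeting the stated budget.

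Next I would reduce each quasiextreme channel to canonical form. Using the singular-value decomposition $\mathcal{E}^\text{qe}=\mathcal{U}(\bm{\varphi})\,\mathcal{E}'\,\mathcal{U}(\bm{\delta})$, the pre- and post-processing unitaries $U(\bm{\delta})$ and $U(\bm{\varphi})$ are two single-qubit operations sandwiching a channel with diagonal distortion $\bm{T}'_\text{qe}$ and shift $\bm{t}'_\text{qe}$. This canonical channel is generated by the two Kraus operators $K_0,K_1$, which I would realize by the circuit of Fig.~\ref{fig2}: after $U(\bm{\delta})$, apply $R_y(2\gamma_1)$ to the ancilla, the \textsc{cnot} controlled by the system, then $R_y(2\gamma_2)$, and measure the ancilla. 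Verifying correctness amounts to tracking the two computational-basis branches of the system qubit, as done above, confirming that outcome $\ket{0}$ implements $K_0$ while outcome $\ket{1}$ implements $K'_1$.

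The delicate point---and the step I expect to control the whole count---is the final correction. Outcome $\ket{1}$ yields $K'_1$ rather than the required $K_1$, the two differing by an $X$ operation. The key observation is that this $X$ is classically controlled by the measurement result, not quantum controlled, so it costs only a single-qubit operation rather than a second \textsc{cnot}; moreover it can be absorbed into the trailing unitary, making the final single-qubit operation either $U(\bm{\varphi})$ or $U(\bm{\varphi})X$ according to the outcome. Tallying the circuit then gives $U(\bm{\delta})$, the two rotations $R_y(2\gamma_1)$ and $R_y(2\gamma_2)$, and the outcome-dependent final unitary---four single-qubit operations---together with exactly one \textsc{cnot} and one resettable ancilla. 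Combined with the classical mixing of the two branches, this establishes the claim for an arbitrary single-qubit CPTP map.
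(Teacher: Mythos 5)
Your proposal is correct and follows essentially the same route as the paper's own proof: the Ruskai--Szarek--Werner convex decomposition into two quasiextreme channels, the circuit of Fig.~\ref{fig2} for each, and the observation that the classically controlled $X$ correction can be absorbed into the final diagonalizing unitary to arrive at four single-qubit operations, one \textsc{cnot}, and one ancilla. No gaps.
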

\begin{proof}
From Theorem $14$ in Ref.~\cite{RSW02}, any single-qubit channel~$\mathcal{E}$
can be decomposed into the convex combination $\mathcal{E}=p \mathcal{E}^\text{qe}_1+(1-p)\mathcal{E}^\text{qe}_2$, with $0\leq p \leq1$.
Note that channels $\mathcal{E}^\text{qe}_1$ and $\mathcal{E}^\text{qe}_2$ can be diagonalized,
but the unitary operators to do so may be different in the two cases.
The quasiextreme channels $\mathcal{E}^\text{qe}_i$ can be realized by using the appropriate initial unitary operator, then applying
the circuit above with corresponding angles $\alpha_i$ and $\beta_i$,
and then applying the final unitary operator.
Then, the channel $\mathcal{E}$ can be simulated by randomly implementing the two quasiextreme channels
according to a classical random number generator with probabilities $p$ and $1-p$.
The above circuit uses one \textsc{cnot}, two rotations, a classically controlled $X$ gate,
and two additional unitary operators to diagonalize $\mathcal{E}^\text{qe}_i$.
The final diagonalizing unitary $U(\bm{\varphi})$ may be combined with the $X$ gate,
so only four single-qubit unitary operators are needed.
\end{proof}

In order to complete the decomposition of the channel into a universal gate set, it is necessary to decompose the
single-qubit unitary operators in Proposition~\ref{prop:E} into the gate set.
In the case that the gate set includes Clifford and $T$ ($T=Z^{1/4}$) operations, then any of the techniques
given in Refs.~\cite{DN06,KMM13a,KMM13b,TvMH13,DCS12,PS12} can be used.
Here, we are concerned with the more general problem of what can be achieved with \textsc{cnot}s and a universal single-qubit gate set $S$.
This problem is relevant to experimental situations where not all single-qubit gates can be applied.
The motivation for considering Clifford and $T$ operations in other work is that they are important for encoded logical qubits with error correction,
but such an experiment would be beyond current technology.

We therefore
consider a variation of the SKDN approach \cite{DN06} with the \textsc{cnot} and gates from $S$.
Figure~\ref{fig:skdn}(a) depicts the SKDN strategy by which any single-qubit unitary operator
\begin{figure}[!b]
\includegraphics[scale=0.2]{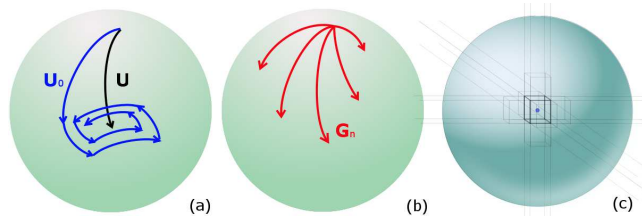}
\caption{
	(color online).
	Schematic diagram for the SKDN algorithm.
	(a) Representation of the algorithm on the Bloch ball.
	The SKDN algorithm finds a polynomial-length gate sequence
	to approximate an arbitrary single-qubit unitary operator~$U$
	by starting with an initial approximation $U_0$ with initial error bound~$\epsilon_0<1/32$
	followed by iteratively constructing operators
	to produce decreasing errors.
	(b)~Action of gates $G_n$ in the lookup database represented as rotations of the sphere.
	(c)~The radius $\pi/2$ ball of single-qubit unitary operations (note that this is different than the Bloch ball).
	Each lookup database gate element~$G_n$ is located within one cube of a
	period $\frac{1}{32\sqrt{3}}$ cubic lattice.
	At the boundary, a cube's center can lie outside the ball but still must be a legitimate domain for the search algorithm.
}
\label{fig:skdn}
\end{figure}
$U(\bm{\theta})=\text{e}^{i\theta_0}\exp(-i\bm{\theta}\cdot\bm{\sigma})$
can be approximately (within~$\epsilon$) decomposed into a
unitary operator~$\tilde{U}=\cdots U_2U_1U_0$ comprising a polylog$(1/\epsilon)$
sequence of gates from $S$~\cite{NC00,KSV02}.

The SKDN algorithm provides an explicit construction
that requires at most $O(\log^{2.71}(1/\epsilon))$ time
and $O(\log^{3.97}(1/\epsilon))$ gates~\cite{DN06}
but requires a database of single-qubit gates $\{G_n\}$
depicted schematically in Fig.~\ref{fig:skdn}(b).
This database gives each~$G_n$ as a sequence of gates from $S$.
However, Dawson and Nielsen do not discuss
how to search this database~\cite{DN06};
we explicitly provide an efficient geometric search technique,
depicted in Fig.~\ref{fig:skdn}(c) and described below.

Ignoring the global phase,
each~$U$ can be identified with coordinate~$\bm{\theta}\in\mathbb{R}^3$.
As $U(\bm{\theta})=U(\bm{\theta}(1-\pi/|\bm{\theta}|))$,
the space~$\mathbb{R}^3$ can be reduced to a radius $\pi/2$ ball,
as depicted in Fig.~\ref{fig:skdn}(c).
We therefore embed a cubic lattice into~$\mathbb{R}^3$ to use as a lookup table.
That is, we construct a database such that, for each cube, there is a sequence of gates that produces a unitary operation within that cube.
Then, if we require a sequence of operators to approximate a given unitary operator,
we identify which cube in the lattice this unitary operator occupies
and then select the corresponding sequence of operators from the database.
Each cube has side length~$\frac{1}{32\sqrt{3}}$, thereby ensuring a maximum separation of~$1/32$ between the unitary and the approximating sequence, which is sufficient for the SKDN algorithm.
For the example of $T$ and $H$ (Hadamard) gates, we find that no more than 36 are required.
(An alternative database lookup procedure is given in Ref.~\cite{TvMH13}.)

Using this database construction with the SKDN algorithm and Proposition~\ref{prop:E}, we have an explicit algorithm
to decompose a single-qubit channel into \textsc{cnot}s and gates from $S$.
This classical design algorithm accepts as input the
error tolerance $\epsilon$ for the single-qubit channel and the channel parameters~$\mathcal E$.
As output, the algorithm delivers the description
of the quantum algorithm implemented as a sequence of gates from the instruction set.

The procedure to be followed will depend on what single-qubit gates are available experimentally and the desired accuracy.
For experiments in the near future, the best approach is likely to be to simply use a lookup database directly,
as it will be challenging to produce single-qubit sequences longer than 36.
Alternatively, if the full set of single-qubit unitaries is available, then one may use the circuit in Fig.~\ref{fig2} directly.
The procedure outlined above should be used if there is a restricted single-qubit gate set and high precision is required.
In the special case that the single-qubit gate set is $\{H,T\}$, then one can use new techniques such as those in Ref.~\cite{VK13}.

For completeness, we need to bound the error in the channel in terms of the error in the unitary.
For the unitary,
the error is simply the worst-case two-norm distance between the true and approximate pure states
in the system Hilbert space
\begin{equation}
	\|U-\tilde{U}\|:=\max_{\ket{\psi}} \|(U-\tilde{U})\ket{\psi}\|.
\end{equation}
The appropriate measure of error for the channel
 is the Schatten one-norm~\cite{Wat05,KBG+11}
\begin{equation}
\label{eq:error}
	\|\mathcal{E}-\mathcal{\tilde{E}}\|_{1\rightarrow1}
		:=\max_{\rho}\|\mathcal{E}(\rho)-\mathcal{\tilde{E}}(\rho)\|_{\text{1}},\;
	\|\bullet\|_{\text{1}} :=\text{tr}\sqrt{\bullet^\dagger\bullet}.
\end{equation}
The following proposition establishes that the channel-simulation error condition is satisfied
if the error bound for the dilated unitary operator~$U$ is~$\epsilon/2$.
\begin{Proposition}\label{prop:pro1}
	For CPTP maps
	$\mathcal{E}, \mathcal{\tilde{E}}:\mathcal{T}(\mathscr{H})\rightarrow\mathcal{T}(\mathscr{H})$
	with respective minimal dilations $U, \tilde{U}$: $\mathscr{H}\otimes\mathscr{H}'\rightarrow\mathscr{H}\otimes\mathscr{H}'$,
	then $\|\mathcal{E}-\mathcal{\tilde{E}}\|_{1\rightarrow 1}\leq 2\|U-\tilde{U}\|$.
\end{Proposition}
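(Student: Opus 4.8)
The plan is to transfer the channel distance to an operator distance using the dilation structure and then to telescope the difference through a single cross term. First I would fix an arbitrary input state $\rho$ and set $\sigma := \rho\otimes\ket{0}\langle 0|$ with $\ket{0}\in\mathscr{H}'$, so that the dilation convention $K_i={}^\text{E}\!\langle i|U\ket{0}^\text{E}$ yields $\mathcal{E}(\rho)=\text{tr}_{\mathscr{H}'}[U\sigma U^\dagger]$ and likewise for $\tilde{\mathcal{E}}$ with $\tilde U$. The minimality hypothesis ensures that $U$ and $\tilde U$ act on the same space $\mathscr{H}\otimes\mathscr{H}'$, so the difference $U-\tilde U$ is well defined. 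I would then write
\begin{equation}
\mathcal{E}(\rho)-\tilde{\mathcal{E}}(\rho)=\text{tr}_{\mathscr{H}'}\!\left[U\sigma U^\dagger-\tilde U\sigma\tilde U^\dagger\right]
\end{equation}
and invoke that the partial trace, being CPTP, is a contraction under the Schatten one-norm, giving $\|\mathcal{E}(\rho)-\tilde{\mathcal{E}}(\rho)\|_1\le\|U\sigma U^\dagger-\tilde U\sigma\tilde U^\dagger\|_1$ and eliminating the environment from the estimate.

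Next I would insert and subtract the cross term $U\sigma\tilde U^\dagger$, giving
\begin{equation}
U\sigma U^\dagger-\tilde U\sigma\tilde U^\dagger=U\sigma(U^\dagger-\tilde U^\dagger)+(U-\tilde U)\sigma\tilde U^\dagger,
\end{equation}
and apply the triangle inequality. Each summand is controlled by the H\"older-type bound $\|ABC\|_1\le\|A\|_\infty\|B\|_1\|C\|_\infty$, with $\|\cdot\|_\infty$ the operator norm. Unitarity gives $\|U\|_\infty=\|\tilde U^\dagger\|_\infty=1$, normalization gives $\|\sigma\|_1=\|\rho\|_1=1$, and $\|U^\dagger-\tilde U^\dagger\|_\infty=\|U-\tilde U\|$ is exactly the operator norm in the statement. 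Hence each summand is at most $\|U-\tilde U\|$ and their sum at most $2\|U-\tilde U\|$. Since this bound holds uniformly in $\rho$, taking the maximum over input states in the definition~(\ref{eq:error}) of $\|\cdot\|_{1\to 1}$ yields the claim.

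The computation is routine; the two facts doing the real work are the contractivity of the partial trace under the one-norm and the submultiplicativity $\|AB\|_1\le\|A\|_\infty\|B\|_1$. I expect the only delicate point to be the bookkeeping guaranteeing that both unitaries refer to the same dilation space, so that $U-\tilde U$ and the inserted cross term are legitimate. The genuine content is the choice of cross term, which forces each resulting factor to be either a unitary (operator norm one) or a normalized state (trace norm one)---the mechanism behind the clean factor of two.
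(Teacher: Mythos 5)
Your proof is correct. It shares the overall skeleton of the paper's argument---reduce the channel distance to a trace-norm distance on the dilation space via contractivity of the partial trace, then bound that by twice the operator norm of $U-\tilde U$---but the mechanism for the second step is different. The paper cites Eq.~(17) of Ref.~\cite{HLSW04}, which bounds $\|U\ket{\psi}\langle\psi|U^\dagger-\tilde U\ket{\psi}\langle\psi|\tilde U^\dagger\|_1$ by $2\|(U-\tilde U)\ket{\psi}\|$ for \emph{pure} states, and then needs convexity of the trace distance to pass from pure inputs to arbitrary $\rho$. You instead prove the key inequality from scratch by inserting the cross term $U\sigma\tilde U^\dagger$ and applying the H\"older bound $\|ABC\|_1\le\|A\|_\infty\|B\|_1\|C\|_\infty$ together with $\|U^\dagger-\tilde U^\dagger\|_\infty=\|U-\tilde U\|$. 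This works directly for mixed $\sigma=\rho\otimes\ket{0}\langle 0|$, so the convexity/pure-state reduction disappears, and the argument is self-contained rather than resting on an external lemma; the paper's version is shorter on the page only because it outsources exactly this telescoping computation to the reference. One small point worth stating explicitly if you write this up: the maximization defining $\|U-\tilde U\|$ must run over unit vectors of the full dilation space $\mathscr{H}\otimes\mathscr{H}'$ (i.e., it is the operator norm there), which is how you have used it and how the paper's chain of inequalities implicitly uses it as well.
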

\begin{proof}
Using Eq.~(17) of Ref.~\cite{HLSW04} and the convexity of trace distance,
\begin{align}
2\max_{\ket{\psi}}\|(U-\tilde{U})\ket{\psi}\| &\ge \max_{\ket{\psi}}\|U\ket{\psi}\langle\psi|U^\dagger-\tilde{U}\ket{\psi}\langle\psi|\tilde{U}^\dagger\|_1 \nonumber \\
&\ge \max_{\rho}\|\mathcal{E}(\rho)-\mathcal{\tilde{E}}(\rho)\|_{\text{1}}.
\end{align}
Using the definitions, this immediately gives the required inequality.
\end{proof}

We now articulate our complete result for the decomposition
of the channel into the universal gate set.
\begin{Proposition}
\label{prop:final}
Any single-qubit channel~$\mathcal{E}$ can be approximated within
one-norm distance $\epsilon$ using $O(\log^{3.97}(1/\epsilon))$ computer time and gates from the set $S$,
and using one \textsc{cnot}, one ancillary qubit and one classical bit.
\end{Proposition}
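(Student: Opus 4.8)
The plan is to assemble Proposition~\ref{prop:final} from the three pieces already established: the channel-decomposition structure of Proposition~\ref{prop:E}, the error-propagation bound of Proposition~\ref{prop:pro1}, and the SKDN cost bounds together with the geometric lookup database. The statement combines a resource count (one \textsc{cnot}, one ancilla, one classical bit) with an asymptotic complexity ($O(\log^{3.97}(1/\epsilon))$ time and gates), so I would prove the resource count and the complexity count separately and then observe they are simultaneously achievable by the same circuit.

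**First** I would address the resource count. By Proposition~\ref{prop:E}, any single-qubit channel~$\mathcal{E}$ is realized as a convex combination $p\mathcal{E}^\text{qe}_1+(1-p)\mathcal{E}^\text{qe}_2$, and each quasiextreme channel is implemented by the circuit of Fig.~\ref{fig2} using one ancillary qubit, one \textsc{cnot}, and four single-qubit operations, with a measurement outcome stored in one classical bit to drive the classically controlled~$X$. The convex combination is implemented by a classical random number generator selecting between the two circuits, which requires no additional quantum resources. Crucially, the two quasiextreme branches are run at most one at a time, so the ancilla, the single \textsc{cnot}, and the single classical bit are reused rather than duplicated; hence the total resource count stays at one \textsc{cnot}, one ancilla, and one classical bit independently of~$\epsilon$.

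**Next** I would establish the complexity count. Each of the four single-qubit unitaries in the circuit must be decomposed into gates from~$S$. I would demand that each such unitary be approximated so that the dilated two-qubit unitary~$U$ is within~$\epsilon/2$ of the target; then Proposition~\ref{prop:pro1} guarantees the resulting channel is within one-norm distance~$\epsilon$ of~$\mathcal{E}$. Applying the SKDN algorithm with the geometric lookup database described above, each single-qubit unitary requires $O(\log^{3.97}(1/\epsilon))$ gates and $O(\log^{2.71}(1/\epsilon))$ classical design time. Since the number of single-qubit unitaries to be decomposed is a fixed constant (four per branch, two branches), multiplying by this constant leaves the asymptotic scaling unchanged, yielding $O(\log^{3.97}(1/\epsilon))$ total gates and $O(\log^{2.71}(1/\epsilon))\subseteq O(\log^{3.97}(1/\epsilon))$ computer time.

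**The main obstacle** I anticipate is the error-budget bookkeeping that links the single-qubit approximation error to the final channel error. One must verify that approximating each of the four single-qubit unitaries to some fixed precision indeed bounds $\|U-\tilde{U}\|$, the two-norm error of the \emph{full} dilated two-qubit unitary, by~$\epsilon/2$; this requires the subadditivity of operator-norm error under composition and tensoring of the constituent gates, together with the care that the classical convex combination does not amplify the error beyond the worst of its two branches (which follows from convexity of the one-norm, already invoked in Proposition~\ref{prop:pro1}). Once this chain is confirmed, distributing the~$\epsilon/2$ budget equally among the constituent unitaries changes only the constant inside the logarithm, leaving the stated asymptotics intact, and the two independent counts combine to give the full claim.
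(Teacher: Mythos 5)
Your proposal is correct and follows essentially the same route as the paper's proof: Proposition~\ref{prop:E} for the resource count, convexity of the one-norm for the probabilistic mixture, Proposition~\ref{prop:pro1} to convert the $\epsilon/2$ unitary budget into the channel bound, an equal split of $\epsilon/8$ over the four single-qubit unitaries, and the SKDN gate and time bounds with the lookup database. The only difference is that you make explicit the subadditivity of operator-norm error under composition, which the paper leaves implicit.
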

\begin{proof}
First, via Proposition~\ref{prop:E} the channel can be decomposed into a convex combination of channels,
and thereby simulated using one ancilla qubit, one \textsc{cnot} operation, and four single-qubit unitary operators.
Provided each of the channels in the convex combination is simulated within distance $\epsilon$, the
overall channel is simulated within distance $\epsilon$ by the convexity of the one-norm distance.

Via Proposition~\ref{prop:pro1}, the error bound for the channel is satisfied if the two-qubit unitary operators are
approximated within distance $\epsilon/2$.
There are four single-qubit unitary operators used within the circuit.
The error bound will be satisfied, provided each of these unitary operators is approximated within distance
$\epsilon/8$.
These unitary operators can be approximated via the SKDN algorithm with $O(\log^{3.97}(1/\epsilon))$ gates from $S$.
Using our lookup database, the SKDN algorithm may be implemented efficiently, in that the classical complexity
to determine the gate sequence does not exceed $O(\log^{2.71}(1/\epsilon))$.
\end{proof}

We now have the full algorithm for open-system single-qubit channel quantum simulation.
For a given input channel, the channel will be decomposed into the form in Proposition~\ref{prop:E}, and then the single-qubit rotations which contain continuous variables therein will further be decomposed into sequences of universal gates satisfying the error condition.
This simulator accepts the initial state~$\rho$ and yields the approximate output state $\mathcal{\tilde E}(\rho)$
while satisfying the error condition of Proposition~\ref{prop:pro1}.

This scheme could be implemented in a number of quantum computing architectures.
For example, it could be implemented with linear optics, although in that case, the \textsc{cnot} is nondeterministic,
and other methods are available to perform nondeterministic channels \cite{Piani11,FPK+12}.
A promising architecture to deterministically demonstrate this scheme is trapped ions.
\textsc{cnot} gates have been demonstrated with error below $0.01$~\cite{BKRB04}, and
single-qubit gates have been demonstrated with error below $10^{-4}$~\cite{BWC+11}.
In the case of trapped ions, it is possible to perform general single-qubit gates, so it is not necessary to use gate sequences.
Nevertheless, the ability to perform large numbers of sequential single-qubit operations (nearly $1000$ in Ref.~\cite{BWC+11})
means that gate sequences could easily be demonstrated.

In summary, we have shown how to implement a single-qubit channel using the \textsc{cnot} and a universal set of single-qubit gates $S$.
This can be regarded as a quantum simulation, except it differs from other quantum simulation methods in
that we directly simulate the mapping
rather than continuous-time evolution.
Our quantum circuit is appealing for experimental implementation because only two qubits are necessary,
rather than three as the Stinespring dilation theorem suggests.
As a result, only one \textsc{cnot} operation is needed, as compared to 20 for a straightforward application of Stinespring dilation.
When decomposing the single-qubit unitary operators into gates from $S$,
the number of gates and classical complexity follow from the Solovay-Kitaev Dawson-Nielsen algorithm.
This work raises a number of questions for future research.
Most importantly, is it possible to achieve similar simplifications for \emph{qudit} channels?
Another question is whether it is possible to obtain further simplifications for the simulation of qubit channels.

\begin{acknowledgments}
	We thank J.\ Eisert, C.\ Horsman, P.\ H{\o}yer, and N.\ Wiebe for valuable discussions.
	D.S.W. acknowledges financial support from USARO.
	Subsequent to the appearance of our work on the arXiv,
	S. D. Bartlett sent us his related unpublished work,
	which showed that the quasiextreme channel in Fig.~1 required only one \textsc{cnot}
	gate and not two~\cite{BBS--}, and we modified accordingly.
	We thank an anonymous referee for the simplified proof of Proposition~\ref{prop:pro1}.
	M.C.O. acknowledges support from AITF
	and the Brazilian agencies CNPq and FAPESP
	through the Instituto Nacional de Ci{\^e}ncia e Tecnologia -- Informa\c{c}{\~a}o Qu{\^a}ntica (INCT-IQ).
	D.W.B. is funded by an ARC Future Fellowship (FT100100761).
	B.C.S. acknowledges AITF, CIFAR, NSERC and USARO for financial support.
	This project was supported in part by the National Science Foundation under Grant No.\ NSF PHY11-25915.
\end{acknowledgments}
\bibliography{oss}
\end{document}